\documentclass[aps,pra,twocolumn,superscriptaddress,preprintnumbers,times]{revtex4-1}
\usepackage{epsfig,color,braket}
\usepackage{graphicx,amssymb,epsfig,amsmath,amsfonts,amsthm,bm,mathtools,mathrsfs,bbm}
\usepackage[colorlinks,citecolor=blue,linkcolor=magenta,urlcolor=blue]{hyperref}

\newcommand{\ketbra}[2]{\left|#1\middle\rangle\middle\langle#2\right|}

\newcommand{\be}{\begin{eqnarray}}
\newcommand{\ee}{\end{eqnarray}}

\DeclareMathOperator{\Tr}{Tr}
\newtheorem{theorem}{Theorem}
\newtheorem{lemma}{Lemma}

\begin{document}

\title{Sufficient criterion for guaranteeing that a two-qubit state is unsteerable}

\author{Joseph Bowles}
\affiliation{D\'epartement de Physique Th\'eorique, Universit\'e de Gen\`eve, 1211 Gen\`eve, Switzerland}
\author{Flavien Hirsch}
\affiliation{D\'epartement de Physique Th\'eorique, Universit\'e de Gen\`eve, 1211 Gen\`eve, Switzerland}
\author{Marco T\'ulio Quintino}
\affiliation{D\'epartement de Physique Th\'eorique, Universit\'e de Gen\`eve, 1211 Gen\`eve, Switzerland}
\author{Nicolas Brunner}
\affiliation{D\'epartement de Physique Th\'eorique, Universit\'e de Gen\`eve, 1211 Gen\`eve, Switzerland}

\begin{abstract}
Quantum steering can be detected via the violation of steering inequalities, which provide sufficient conditions for the steerability of quantum states. Here we discuss the converse problem, namely ensuring that an entangled state is unsteerable, and hence Bell local. We present a simple criterion, applicable to any two-qubit state, which guarantees that the state admits a local hidden state model for arbitrary projective measurements. Specifically, we construct local hidden state models for a large class of entangled states, which can thus not violate any steering or Bell inequality. In turn, this leads to sufficient conditions for a state to be only one-way steerable, and provides the simplest possible example of one-way steering. Finally, by exploiting the connection between steering and measurement incompatibility, we give a sufficient criterion for a continuous set of qubit measurements to be jointly measurable.
\end{abstract}

\maketitle
%
%%%%%%%%%%%%%%%%%%%%%%%%%%%%%%%%%%%%%%%%%%%%%%%%%%%%%%%%%%%%%%%%
											%    INTRO   %
%%%%%%%%%%%%%%%%%%%%%%%%%%%%%%%%%%%%%%%%%%%%%%%%%%%%%%%%%%%%%%%%
%
\section{Introduction}

Entanglement lies at the heart of quantum physics. Notably, correlations arising from local measurements performed on separated entangled systems can exhibit nonlocal correlations \cite{bell64,brunner_review}. Specifically, the observed statistics cannot be reproduced using a local hidden variable model, as witnessed by violation of a Bell inequality. 

Recently, the effect of Einstein-Podolsky-Rosen steering has brought novel insight to quantum nonlocality. Originally discussed by Schr{\"o}dinger \cite{schrodinger}, and used in quantum optics \cite{reid08}, the effect was recently formalized in a quantum information-theroretic setting \cite{wiseman07}. Considering two distant observers sharing an entangled state, steering captures the fact that one observer, by performing a local measurement on his subsystem, can nonlocally `steer' the state of the other subsystem. Steering can be understood as a form of quantum nonlocality intermediate between entanglement and Bell nonlocality \cite{wiseman07,quintino15}, and is useful to explore the relation between these concepts. It was demonstrated experimentally, see e.g. \cite{wittman}, and finds application in quantum information processing \cite{cyril,piani,valerio}.

Steering can be detected via steering inequalities (analogous to Bell inequalities) \cite{Cavalcanti09}, violation of which provides a sufficient condition for a given quantum state to be steerable. Derived for both discrete and continuous variable quantum systems \cite{Cavalcanti09,walborn,Cavalcanti15}, such inequalities can be obtained using semidefinite programming \cite{pusey13,paulsteering,Kogias15,zhu15}.

Interestingly, whereas the effect of steering implies the presence of entanglement, the converse does not hold \cite{wiseman07}. Specifically, there exist entangled states which can provably not give rise to steering (hence refereed to as unsteerable) \cite{werner89,wiseman07}, even when general measurements are considered \cite{quintino15}. The correlations of such states can in fact be reproduced without entanglement, using a so-called local hidden state (LHS) model \cite{wiseman07}, and therefore can never violate any steering inequality. Since a LHS model is a particular case of a local hidden variable model, any unsteerable state is Bell local.

Determining which entangled states are steerable and which ones are not is a challenging problem in general.  This is mainly due to the fact that, when constructing a LHS model, one must ensure that the model reproduces the desired quantum correlations for any possible measurements. LHS models have been constructed for entangled states featuring a high level of symmetry \cite{werner89,barrett02,almeida07,bowles14,bowles15}; see \cite{augusiak_review} for a review. For more general states very little is known, even for the simplest case of two-qubit states. Based on the concept of the steering ellipsoid \cite{ellipsoid}, 
%recent works provided sufficient conditions for steerability \cite{Milne14} of two-qubit states, and a 
Ref. \cite{sania14} derived a condition guaranteeing unsteerability of Bell diagonal two-qubit state. This method is however not applicable to general two-qubit states, for which unsteerability conditions are still missing.

Here, via the construction of a class of LHS models, we derive a simple criterion sufficient for guaranteeing that a two-qubit state is unsteerable, considering arbitrary projective measurements.
%Our criterion is shown to become necessary and sufficient for specific classes of states, but is in general not necessary for unsteerability. 
In turn, this criterion can also be used to guarantee one-way steerability \cite{Midgley10,bowles14}, a weak form of steering where only one of the observers can steer the state of the other. We illustrate the relevance of the criterion with examples, providing in particular the simplest possible example of one-way steering. Finally, by exploiting the strong connection between steering and measurement incompatibility \cite{quintino14,Uola14}, we provide a sufficient condition for a continuous set of dichotomic qubit POVMs to be jointly measurable.

%
%%%%%%%%%%%%%%%%%%%%%%%%%%%%%%%%%%%%%%%%%%%%%%%%%%%%%%%%%%%%%%%%
											%    PRELIMINARIES   %
%%%%%%%%%%%%%%%%%%%%%%%%%%%%%%%%%%%%%%%%%%%%%%%%%%%%%%%%%%%%%%%%
%
\section{Preliminaries} 
Consider two distant parties, Alice and Bob, sharing an entangled quantum state $\rho$. On her subsystem, Alice makes measurements, described by operators $\{M_{a|x}\}$, with $M_{a|x}\geq 0$ and $\sum_a M_{a|x}= \openone$, where $x$ denotes the measurement setting and $a$ its outcome. The possible states of Bob's subsystem, conditioned on Alice's measurement $x$ and her output $a$, are characterized by a collection of (subnormalized) density matrices $\{\sigma_{a|x}\}_{a,x}$, called an assemblage, with
\begin{align}\label{assdef}
\sigma_{a|x} = \Tr_{A}(M_{a|x}\otimes \openone\;\rho).
\end{align}
Note that it also includes Alice's marginal statistics $p(a|x)=\Tr \sigma_{a|x}$. The assemblage $\{ \sigma_{a|x}\} $ is called unsteerable if it can be reproduced by a LHS model, i.e. it admits a decomposition 
\begin{align}\label{lhsass}
\sigma_{a|x}=\sigma^{\text{\tiny{LHS}}}_{a|x}=\int \sigma_{\lambda} \, p(a|x\lambda)\text{d}\lambda  \quad \forall a,x ,
\end{align}
where $\{\sigma_{\lambda}\}$ is a set of positive matrices such that $\int\Tr\sigma_{\lambda}\text{d}\lambda=1$ and the $p(a|x,\lambda)$'s are probability distributions. The right hand side of the above can be understood as follows. Alice sends the quantum state $\sigma_\lambda/\Tr\sigma_\lambda$ to Bob with probability density $\Tr\sigma_\lambda$. Given her measurement input $x$, she then outputs $a$ with probability $p(a|x,\lambda)$. 
%Averaging over $\lambda$, $\sigma^{\text{\tiny{LHS}}}_{a|x}$ gives Bob's reduced states conditioned on $a$ and $x$ and is the equivalent of $\sigma_{a|x}$ for the LHS model. 
In this way, Alice can prepare the same assemblage for Bob as if the state $\rho$ had been used, without the need for entanglement. Bob will thus be unable to distinguish whether he and Alice share the entangled state $\rho$ or if the above LHS strategy were used. On the contrary, if a decomposition of the form \eqref{lhsass} does not exist, which can be certified e.g. via violation of a steering inequality \cite{Cavalcanti09}, the use of an entangled state is certified. In this case, $\rho$ is termed steerable from Alice to Bob.

Interestingly, not all entangled states are steerable. That is, there exists entanged states, called unsteerable, which admit a LHS model for all projective measurements \cite{werner89,wiseman07} and even considering general POVMs \cite{quintino15}. A natural question is thus to determine which entangled states are steerable, and which ones are not. This is a challenging problem, mainly due to the difficulty of constructing LHS models for a continuous set of measurements.

%
%%%%%%%%%%%%%%%%%%%%%%%%%%%%%%%%%%%%%%%%%%%%%%%%%%%%%%%%%%%%%%%%
											%    CRITERION   %
%%%%%%%%%%%%%%%%%%%%%%%%%%%%%%%%%%%%%%%%%%%%%%%%%%%%%%%%%%%%%%%%
%

\section{Sufficient criterion for unsteerability} Our main result is a simple criterion, sufficient for a two-qubit state to admit a LHS model for arbitrary projective measurements. Consider a general two-qubit state, expressed in the local Pauli basis
\small\begin{align}\label{rho1}
\rho_0=\frac{1}{4}\left(\openone+\vec{a}_0\cdot\vec{\sigma}\otimes \openone+\openone\otimes\vec{b}_0\cdot\vec{\sigma}+\sum_{i,j=x,y,z}T_{ij}^0\sigma_{i}\otimes\sigma_{j}\right),
\end{align}\normalsize
where $\vec{a}_0$ and $\vec{b}_0$ are Alice and Bob's local Bloch vectors and $\vec{\sigma}=(\sigma_x,\sigma_y,\sigma_z)$ is the vector of Pauli matrices. Our criterion for unsteerability is simply based on the local Bloch vectors and the correlation matrix $T_{ij}^0$. 

The first step consists in converting the state $\rho_0$ into a canonical form. For this, based on previous work \cite{rodrigo14,quintino15}, we make the observation that

\begin{lemma}
Let $\Lambda$ be a positive linear map on the set of quantum states and
\begin{align}
\rho_{\Lambda}=\openone\otimes \Lambda (\rho)/\Tr(\openone\otimes \Lambda (\rho))
\end{align}
be a valid bipartite quantum state. If $\rho$ is unsteerable from Alice to Bob, then $\rho_{\Lambda}$ is also unsteerable from Alice to Bob. Furthermore, if $\Lambda$ is invertible and its inverse map positive, then $\rho$ is unsteerable from Alice to Bob if and only if $\rho_{\Lambda}$ is unsteerable from Alice to Bob.
\end{lemma}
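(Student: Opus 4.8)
The plan is to reduce the statement to a single transformation rule for the assemblage and then push a local hidden state decomposition through the map $\Lambda$. The crucial observation is that $\Lambda$ acts only on Bob's subsystem, whereas the assemblage is obtained by tracing out Alice; since the partial trace over Alice and the application of $\Lambda$ on Bob commute, linearity of $\Lambda$ yields
\begin{align}
\Tr_A\!\big(M_{a|x}\otimes\openone\;(\openone\otimes\Lambda)(\rho)\big)=\Lambda\big(\Tr_A(M_{a|x}\otimes\openone\;\rho)\big)=\Lambda(\sigma_{a|x}).
\end{align}
Writing $N=\Tr\big((\openone\otimes\Lambda)(\rho)\big)$, the assemblage produced by $\rho_\Lambda$ under the same measurements $\{M_{a|x}\}$ is therefore $\sigma^{\Lambda}_{a|x}=\Lambda(\sigma_{a|x})/N$.

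For the first (one-directional) claim, I would start from an LHS decomposition $\sigma_{a|x}=\int\sigma_\lambda\,p(a|x\lambda)\,\mathrm{d}\lambda$ for $\rho$ and apply $\Lambda$ to both sides, using linearity to obtain $\sigma^{\Lambda}_{a|x}=\int\big[\Lambda(\sigma_\lambda)/N\big]\,p(a|x\lambda)\,\mathrm{d}\lambda$. The candidate hidden states are then $\sigma'_\lambda=\Lambda(\sigma_\lambda)/N$, with the response functions $p(a|x\lambda)$ left untouched. Positivity of $\Lambda$ is exactly what guarantees $\sigma'_\lambda\geq 0$: the hidden states live on Bob's system alone, so no complete positivity is needed. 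It remains to verify normalization, which follows since summing the decomposition over $a$ gives $\int\sigma_\lambda\,\mathrm{d}\lambda=\Tr_A\rho$, whence $\int\Tr\sigma'_\lambda\,\mathrm{d}\lambda=\Tr\Lambda(\Tr_A\rho)/N=1$ by the definition of $N$. This exhibits an explicit LHS model for $\rho_\Lambda$.

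For the equivalence, I would simply run the same argument backwards. When $\Lambda$ is invertible with positive inverse, the identity above gives $(\openone\otimes\Lambda^{-1})(\rho_\Lambda)=\rho/N$, so $\rho$ is, up to normalization, the image of $\rho_\Lambda$ under the positive map $\Lambda^{-1}$ acting on Bob's side. Applying the already-proved direction with $\Lambda^{-1}$ in place of $\Lambda$ and $\rho_\Lambda$ in place of $\rho$ shows that unsteerability of $\rho_\Lambda$ implies unsteerability of $\rho$, closing the equivalence.

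I do not expect a serious obstacle here; the argument is essentially the commutation identity plus bookkeeping. The one point demanding care is the normalization constant $N$: I must confirm it is strictly positive (guaranteed by the hypothesis that $\rho_\Lambda$ is a valid state) and that it is consistently the same factor appearing in both the assemblage rescaling and the hidden-state rescaling, so that the transported model is correctly normalized. A secondary subtlety worth flagging is conceptual rather than technical, namely emphasizing why positivity rather than complete positivity of $\Lambda$ suffices --- precisely because it is applied to the single-system hidden states $\sigma_\lambda$, never to an entangled object.
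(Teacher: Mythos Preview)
Your argument is correct and is precisely the standard one: the paper itself does not spell out a proof but refers to Lemma~2 of Ref.~\cite{quintino15}, remarking that complete positivity can be relaxed to positivity, and your write-up is exactly that argument made explicit. The commutation of $\Tr_A$ with a local map on Bob, the transport of the LHS ensemble $\sigma_\lambda\mapsto\Lambda(\sigma_\lambda)/N$, and the use of $\Lambda^{-1}$ for the converse are all as in the referenced proof, so there is nothing to add.
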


\noindent Note that $\Lambda$ does not have to be completely positive and may therefore correspond to a non-quantum operation. For a proof see e.g. Lemma 2 of \cite{quintino15}, where the condition of complete positivity can simply be relaxed.

Let us now consider the positive linear map 
\begin{align}\label{cmap}
\openone\otimes\Lambda(\rho_0)=\openone\otimes \rho_{B}^{-1/2}\; \rho_0 \; \openone\otimes \rho_{B}^{-1/2}
\end{align}
where $\rho_{B}=\Tr_{A}[\rho_0]$. This map is invertible as long as $\rho_B$ is mixed with the inverse (positive) map given by
\begin{align}\label{cmap}
\openone\otimes\Lambda^{-1}(\rho_0)=\openone\otimes \rho_{B}^{1/2}\; \rho_0 \; \openone\otimes \rho_{B}^{1/2}.
\end{align}
The interesting property of this map is that when applied to an arbitrary state $\rho_0$, the resulting state has $\vec{b}=0$, i.e. Bob's reduced state is maximally mixed \cite{milne14}. Given the above Lemma, the application of the map preserves the steerability (or unsteerability) of $\rho_0$.

Finally, we apply local unitaries (which can also not change the steerability of the state) so that our state has a diagonal $T$ matrix, giving us the canonical form
\begin{align}\label{canonical}
\rho=\frac{1}{4}\left(\openone+\vec{a}\cdot\vec{\sigma}\otimes \openone+\sum_{i=x,y,z}T_{i} \, \sigma_{i}\otimes\sigma_{i}\right),
\end{align}
where $\vec{a}$ and $T_{i}$ are in general different from the original $\vec{a}_0$ and $T_{ii}^0$. Below we give a sufficient criterion for the unsteerability of any state $\rho$ expressed in the canonical form. In turn this provides a sufficient criterion for unsteerability of any two-qubit state.

\begin{theorem}
Let $\rho_0$ be a two-qubit state with corresponding canonical form $\rho$ as given in \eqref{canonical}. If
\begin{align} \label{condition}
\max_{\hat{x}} \left[\;(\vec{a}\cdot\hat{x})^2+2\;||T\hat{x}|| \; \right] \leq 1,
\end{align}
where $\hat{x}$ is a normalized vector and $||\cdot||$ the euclidean vector norm, then $\rho$ is unsteerable from Alice to Bob, considering arbitrary projective measurements.
\end{theorem}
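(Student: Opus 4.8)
The plan is to construct, for every state $\rho$ in the canonical form \eqref{canonical} satisfying \eqref{condition}, an explicit LHS model of the form \eqref{lhsass} reproducing the assemblage for all projective measurements. First I would parametrize Alice's projective measurements by a Bloch direction $\hat{x}$ with outcomes $a=\pm1$ and operators $M_{a|\hat{x}}=\tfrac{1}{2}(\openone+a\,\hat{x}\cdot\vec{\sigma})$, and compute the assemblage directly from \eqref{assdef}. Using $\Tr[\sigma_i\sigma_j]=2\delta_{ij}$ and the fact that Bob's Bloch vector vanishes in the canonical form, a short calculation gives
\begin{align}
\sigma_{a|\hat{x}}=\tfrac{1}{4}\big[(1+a\,\vec{a}\cdot\hat{x})\,\openone+a\,(T\hat{x})\cdot\vec{\sigma}\big].
\end{align}
The key structural point is that the hidden-state ensemble can be chosen independently of $\hat{x}$: I would take the uniform distribution over pure qubit states, $\sigma_{\lambda}=\tfrac{1}{4\pi}\cdot\tfrac{1}{2}(\openone+\hat{\lambda}\cdot\vec{\sigma})$ with $\hat{\lambda}\in S^2$, which already satisfies $\int\Tr\sigma_{\lambda}\,d\hat{\lambda}=1$ and averages to Bob's reduced state $\openone/2$.

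With this ensemble fixed, finding an LHS model reduces to choosing, for each $\hat{x}$, a response function $\chi_a(\hat{x},\hat{\lambda})=p(a|\hat{x},\hat{\lambda})\in[0,1]$ with $\chi_{+}+\chi_{-}=1$ such that $\sigma_{a|\hat{x}}=\int\tfrac{d\hat{\lambda}}{4\pi}\,\chi_a(\hat{x},\hat{\lambda})\,\tfrac{1}{2}(\openone+\hat{\lambda}\cdot\vec{\sigma})$. Matching the $\openone$ and $\vec{\sigma}$ parts, and writing $\langle\cdot\rangle$ for the uniform average over the sphere, this is equivalent to the two moment conditions $\langle\chi_{+}\rangle=\tfrac{1}{2}(1+\vec{a}\cdot\hat{x})=:s$ and $\langle\hat{\lambda}\,\chi_{+}\rangle=\tfrac{1}{2}T\hat{x}=:\vec{w}$; the conditions for the outcome $a=-$ then follow automatically from $\chi_{-}=1-\chi_{+}$ and $\langle\hat{\lambda}\rangle=0$, so only the $+$ outcome needs to be handled.

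The crux is therefore a moment problem on the sphere: for which targets $(s,\vec{w})$ does there exist $\chi_{+}:S^2\to[0,1]$ with $\langle\chi_{+}\rangle=s$ and $\langle\hat{\lambda}\,\chi_{+}\rangle=\vec{w}$? I would solve this by a bathtub/rearrangement argument: for fixed $\langle\chi_{+}\rangle=s$, the component $\langle\hat{\lambda}\,\chi_{+}\rangle\cdot\hat{v}$ in any direction $\hat{v}$ is maximized by the cap indicator $\chi_{+}=\Theta(\hat{\lambda}\cdot\hat{v}-c)$ with threshold $c=1-2s$ fixing the mean. A direct integration gives the extremal value $\int_c^1 u\,\tfrac{du}{2}=s(1-s)$, so the attainable set of $\langle\hat{\lambda}\,\chi_{+}\rangle$ at fixed mean $s$ is exactly the ball of radius $s(1-s)$, and feasibility is equivalent to $\|\vec{w}\|\le s(1-s)$. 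Sufficiency is made explicit by taking $\chi_{+}=\mu\,\Theta(\hat{\lambda}\cdot\hat{w}-c)+(1-\mu)\,s$ with $\mu=\|\vec{w}\|/(s(1-s))\in[0,1]$, which lies in $[0,1]$ and realizes any feasible target.

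Finally I would translate $\|\vec{w}\|\le s(1-s)$ back into the state parameters. Substituting $s=\tfrac{1}{2}(1+\vec{a}\cdot\hat{x})$ and $\vec{w}=\tfrac{1}{2}T\hat{x}$ gives $s(1-s)=\tfrac{1}{4}\big(1-(\vec{a}\cdot\hat{x})^2\big)$ and $\|\vec{w}\|=\tfrac{1}{2}\|T\hat{x}\|$, so feasibility for the measurement $\hat{x}$ reads $(\vec{a}\cdot\hat{x})^2+2\|T\hat{x}\|\le1$; demanding this for all projective measurements is precisely \eqref{condition}. I expect the main obstacle to be the moment problem itself—establishing that the cap is the extremizer and that $s(1-s)$ is the exact radius—since everything else is either a direct computation or follows from the fixed, measurement-independent ensemble that makes the construction a legitimate LHS model.
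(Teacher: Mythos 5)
Your proposal is correct and follows essentially the same route as the paper: the same uniform ensemble of pure Bloch-sphere states, the same spherical-cap response functions (the paper convexifies over the cap threshold $c$, you mix the cap indicator with the constant $s$ --- equivalent constructions), and the same final reduction $\|\vec{w}\|\le s(1-s)\;\Leftrightarrow\;(\vec{a}\cdot\hat{x})^2+2\|T\hat{x}\|\le 1$. The only cosmetic difference is your moment-problem/bathtub framing, and the extremality of the cap is not actually needed for sufficiency --- the direct cap integration plus the interpolation already closes the argument.
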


\begin{proof}
We first characterize the assemblage resulting from projective measurements on a state in the canonical form $\rho$. Alice's measurement is given by a Bloch vector $\hat{x}$ and output $a=\pm1$, corresponding to operators $ M_{\pm|\hat{x}} = (\openone + \hat{x} \cdot \vec{\sigma})/2$. For $a=+1$, the steered state is (see for example \cite{sania14})
\begin{align}\label{assparam}
\sigma_{+ |\hat{x}}=\Tr_A( M_{+ | \hat{x}}  \otimes \openone \rho) = \frac{1}{4}[(1+\vec{a}\cdot\hat{x})\openone+T\hat{x}\cdot\vec{\sigma}].
\end{align} 
Notice that the above state is diagonal in the basis $\{ \ket{\hat{s} }, \ket{-\hat{s}} \}$, with Bloch vector $\hat{s} = \frac{T\hat{x}}{|| T\hat{x} ||}$; we omit the $\hat{x}$ dependence to ease notation. The corresponding eigenvalues are
\small\begin{align} \label{digs}
\alpha(\hat{x})=\frac{1}{4}(1+\vec{a}\cdot\hat{x}+||T\hat{x}||)  , \,\,
\beta(\hat{x})=\frac{1}{4}(1+\vec{a}\cdot\hat{x}-||T\hat{x}||).
\end{align}\normalsize
Note that by construction $\alpha(\hat{x}) \geq \beta(\hat{x})$.

Our goal is now to construct a LHS model for this assemblage. First, the local hidden states $\sigma_\lambda$ are taken to be pure qubit states, hence represented by unit Bloch vectors $\hat{\lambda}$, and uniformly distributed over the sphere
\begin{align}
\sigma_{\hat{\lambda}}=\frac{\ket{\hat{\lambda}}\bra{\hat{\lambda}}}{4\pi}.
\end{align}
Normalization is ensured as $\int \Tr[\sigma_{\hat{\lambda}}]\text{d}\hat{\lambda}=1$. 
This ensures that we obtain the correct reduced state for Bob: 
\begin{align}\label{rhob}
\frac{1}{4\pi}\int \ket{\hat{\lambda}}\bra{\hat{\lambda}} \text{d}\hat{\lambda}=\frac{\openone}{2} = \rho_B.
\end{align}

Next, we define Alice's response function to be given by the distribution 
\begin{align}\label{fun}
p(\pm |\hat{x},\hat{\lambda})=\frac{1\pm\text{sgn}(\hat{s}\cdot \hat{\lambda}-c(\hat{x}))}{2},
\end{align}
parameterized by the function $-1\leq c(\hat{x})\leq1$, and $\hat{s}$ the Bloch vector of the eigenvector of $\sigma_{+ |\hat{x}}$ with the largest eigenvalue. The function \eqref{fun} can be understood as follows (see Fig.\ref{pfig}). If $\hat{\lambda}$ is in the spherical cap centred on $\hat{s}$ such that $\hat{\lambda}\cdot\hat{s} \geq c(\hat{x})$, the output is $a=+1$, otherwise $a=-1$. Note that we need only concentrate on the case $a=+1$;  the case $a=-1$ is automatically satisfied from $\sigma_{+1|\hat{x}}+\sigma_{-1|\hat{x}}=\rho_B$ and \eqref{rhob}. We now calculate the assemblage predicted by this model, given by
\begin{figure}
\includegraphics[height=145pt]{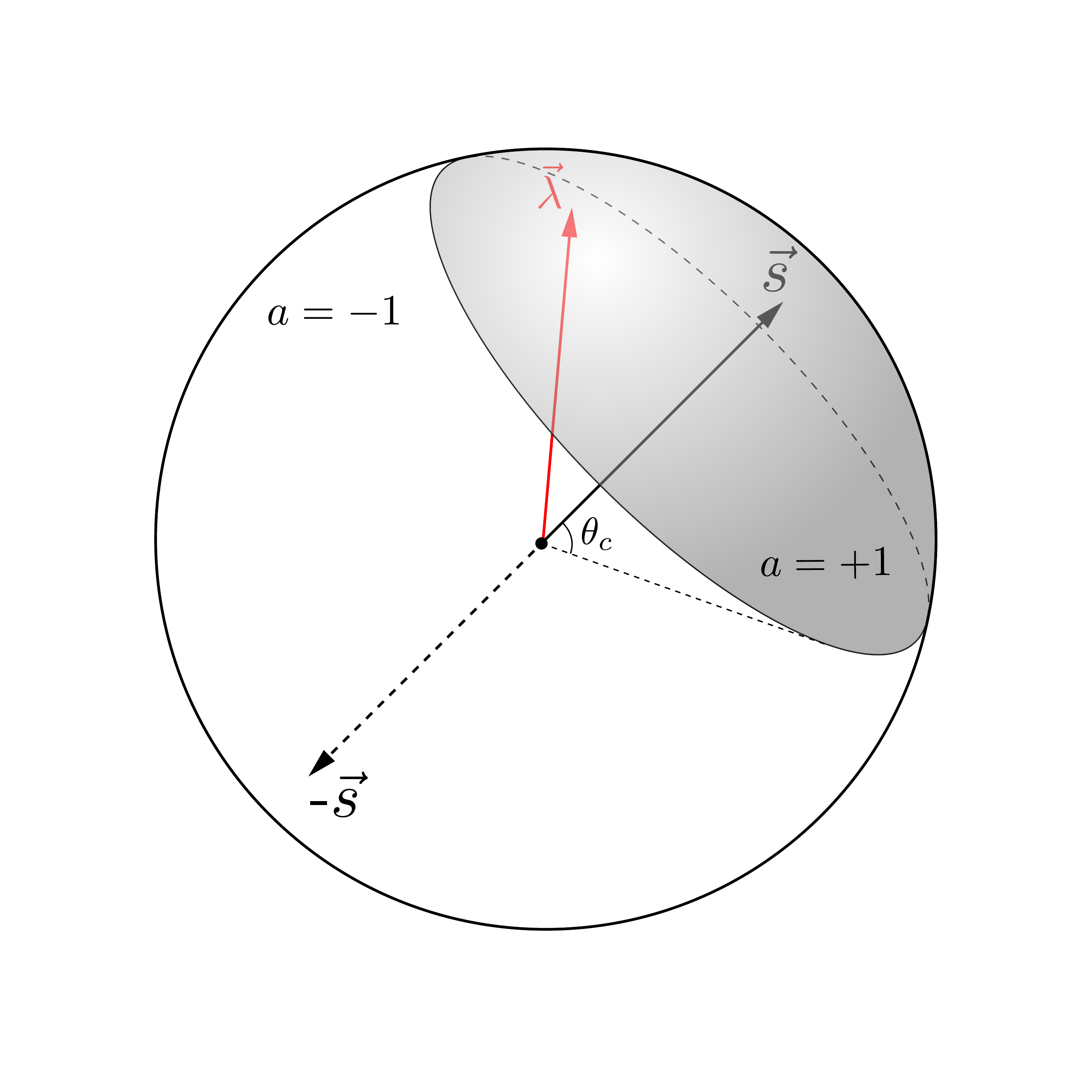}
\caption{\label{pfig} Illustration of Alice's response function \eqref{fun} in our LHS model. If $\text{sgn}(\hat{s}\cdot \hat{\lambda}-c(\hat{x}))\geq0$ then $a=+1$ (shaded spherical cap, with angle $\theta_c=\arccos[c]$), otherwise $a=-1$. The assemblage \eqref{asslhs} then corresponds to the average (sub-normalized) density matrix obtained by integrating pure qubit states $\ket{\hat{\lambda}}$ over the shaded region.}
\end{figure}

\begin{figure}
\includegraphics[scale=0.35]{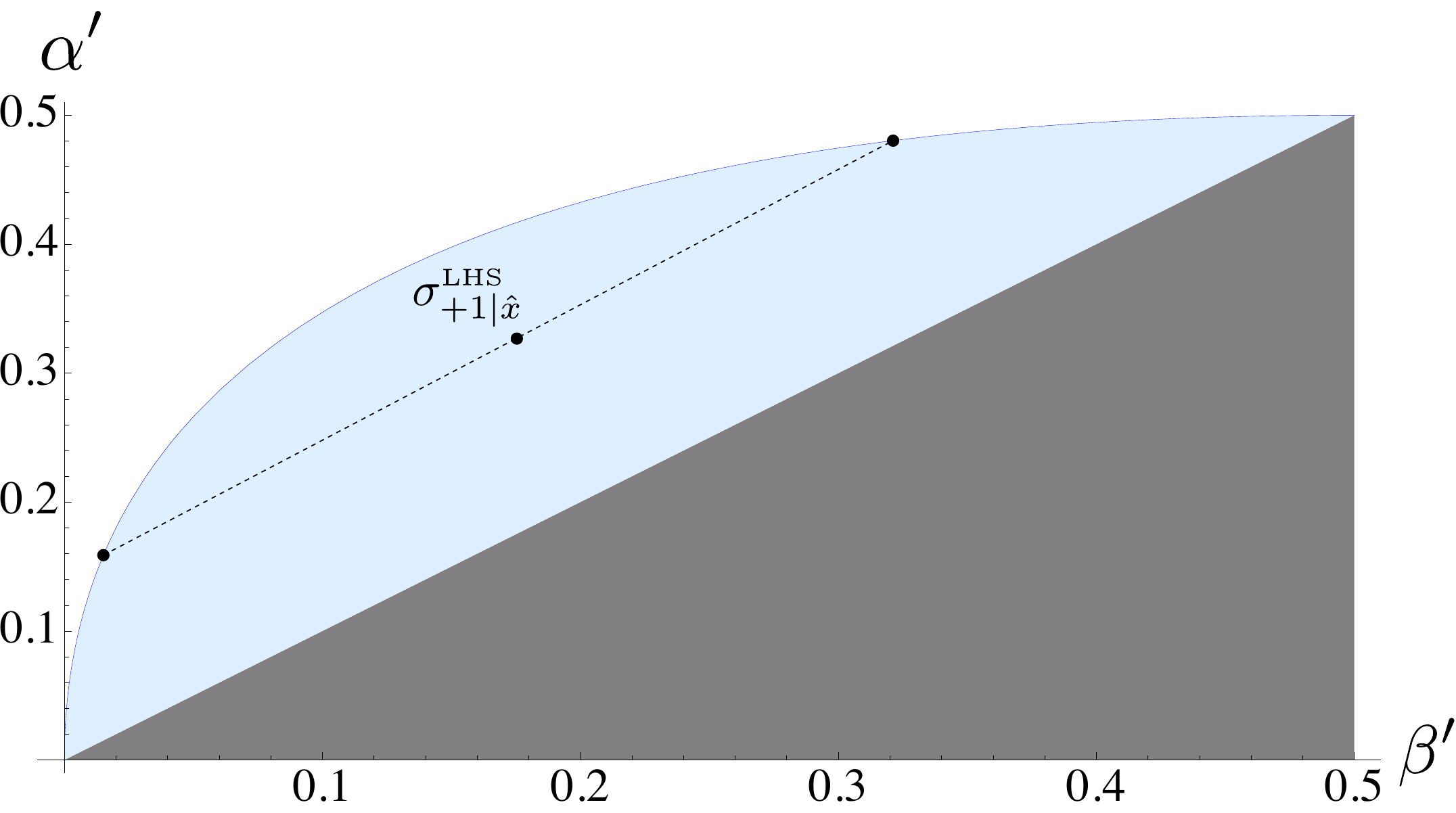}
\caption{\label{abfig}Plot of the achievable range of eigenvalues $(\alpha',\beta')$ in our LHS model (for a fixed direction $\hat{s}$). The upper blue curve corresponds to the condition $\alpha'=\sqrt{2\beta'}-\beta'$ and is achieved by the response functions \eqref{fun}; any point in the light blue area below may be achieved by taking a suitable convex combination of these functions (e.g. dashed line). Since we have $\alpha'\geq\beta'$, the grey area is not of interest.}
\end{figure}
\begin{align}\label{asslhs}
\sigma^{\text{LHS}}_{+1|\hat{x}}  &=  \int  \sigma_{\hat{\lambda}}\;p(+|\hat{x},\hat{\lambda}) d\hat{\lambda} = \frac{1}{4\pi} \int  \ket{\hat{\lambda}}\bra{\hat{\lambda}}p(+|\hat{x},\hat{\lambda}) d\hat{\lambda}. \nonumber
\end{align}
We parameterise the state $\ket{\hat{\lambda}}$ using the Bloch decomposition in the basis $\{\ket{\hat{s}},\ket{-\hat{s}}\}$:
\begin{align}
\ket{\hat{\lambda}}=\ket{\hat{\lambda}(\theta,\phi)}=\cos\frac{\theta}{2}\ket{\hat{s}}+\sin\frac{\theta}{2}e^{i\phi}\ket{-\hat{s}}.
\end{align}
Working in this basis and integrating over the spherical cap for which $a=+1$ (see Fig. \ref{pfig}), \eqref{asslhs} becomes
\begin{align}
\int_{0}^{2\pi}\int_{0}^{\theta_c}\begin{pmatrix}\cos^2\frac{\theta}{2} & \cos\frac{\theta}{2}\sin\frac{\theta}{2}e^{-i\phi} \\[3pt] \cos\frac{\theta}{2}\sin\frac{\theta}{2}e^{i\phi} &\sin^2\frac{\theta}{2} \end{pmatrix}\frac{\sin\theta\text{d}\phi\text{d}\theta}{4\pi}, \nonumber
\end{align}
where $\theta_c=\arccos[c(\hat{x})]$ is the angle of the spherical cap. Since $\int_0^{2\pi}e^{i\phi}\text{d}\phi=0$, the off-diagonal components will be zero, and $\sigma^{\text{\tiny{LHS}}}_{+1|\hat{x}}$ is therefore diagonal in the $\{\ket{\hat{s}},\ket{-\hat{s}}\}$ basis, as desired. From this, the eigenvalues of $\sigma^{\text{\tiny{LHS}}}_{+1|\hat{x}}$, $\alpha'(\hat{x})$ and $\beta'(\hat{x})$, are given by 
\begin{align}
\alpha'(\hat{x})+\beta'(\hat{x})&=\frac{1}{2}\int_0^{\theta_c}\sin\theta\text{d}\theta=\frac{1-\cos\theta_c}{2} ;\\
\alpha'(\hat{x})-\beta'(\hat{x})&=\frac{1}{2}\int_0^{\theta_c}\cos\theta\sin\theta\text{d}\theta=\frac{1-\cos^2\theta_c}{4}.
\end{align}
Upon using $\theta_c=\arccos[c(\hat{x})]$ one then finds 
\begin{align}
\alpha'(\hat{x})+\beta'(\hat{x})&=\frac{1}{2}(1-c(\hat{x})) ;  \\
\alpha'(\hat{x})-\beta'(\hat{x})&=\frac{1}{4}(1-c^2(\hat{x})),
\end{align}
from which we get the eigenvalues as a function of $c(\hat{x})$ as
\begin{align}\label{evalues}
\alpha'(\hat{x})=\sqrt{2\beta'(\hat{x})}-\beta'(\hat{x})  \, ; \,\, \beta'(\hat{x})=\frac{1}{8}(1-c(\hat{x}))^2,
\end{align}
corresponding to the curve of Fig. \ref{abfig}. Since this curve is concave, by fixing $\hat{s}$ and taking convex combinations of the response functions \eqref{fun} with different $c(\hat{x})$, we may prepare any steered states corresponding to $(\alpha',\beta')$ below this curve, leading finally to \begin{align}
\beta'(\hat{x}) \leq \alpha'(\hat{x}) \leq \sqrt{2\beta'(\hat{x})}-\beta'(\hat{x}).
\end{align} 
This corresponds to the blue area in Fig. \ref{abfig}. We thus conclude that the model reproduces the assemblage of any canonical state $\rho$, as long as its eigenvalues satisfy the above relation, i.e. $\alpha(\hat{x}) \leq \sqrt{2\beta(\hat{x})}-\beta(\hat{x})$, for any measurement vector $\hat{x}$, or equivalently  
\begin{align}
\max_{\vec{x}} \left[(\alpha(\vec{x})+\beta(\vec{x}))^2-2\beta(\vec{x})\right]\leq 0.
\end{align}
Using $\eqref{digs}$ to convert this maximisation into Bloch vector notation we arrive at \eqref{condition}.

\end{proof}

A natural question is whether condition \eqref{condition} is also necessary for unsteerability. Unfortunately, this is not the case. Consider the state $\rho_c = \frac{1}{2}(\ket{00}\bra{00}+\ket{11}\bra{11})$, which does not satisfy \eqref{condition} (choose e.g. $\hat{x}= (0,0,1)$), but is separable hence clearly unsteerable. Note however that condition \eqref{condition} can in fact be strengthened by considering convex combination with separable states (see Appendix A). An interesting open question is then whether there exist unsteerable states, which cannot be written as convex combinations of unsteerable states satisfying condition \eqref{condition} and separable states. Nevertheless, condition \eqref{condition} turns out to be useful for proving the unsteerability of interesting classes of states, as we illustrate below. 
%Notably, the condition is particularly useful in the case of entangled states having non-zero marginal vectors $\vec{a}$ on Alice's side. 

\section{Applications}
We now illustrate the relevance of the above result with some applications. We will consider the class of states 
\begin{align}\label{mafalda}
\rho(p,\chi)=p\ket{\psi_\chi}\bra{\psi_\chi}+(1-p)\rho^{A}_{\chi}\otimes\openone/2
\end{align}
where $\ket{\psi_\chi}=\cos\chi\ket{00}+\sin\chi\ket{11}$ is a partially entangled two-qubit state, $\rho^{A}_{\chi}=\Tr_{B}\ket{\psi_\chi}\bra{\psi_\chi}$ and have $p\in[0,1]$, $\chi\in]0,\pi/4]$. The state is entangled for $p>1/3$. From Theorem 1, it follows that $\rho(p,\chi)$ is unsteerable from Alice to Bob if 
\begin{align}\label{mafcon}
\cos^2 (2\chi) \geq \frac{2p-1}{(2-p)p^3}
\end{align}
as we show in Appendix B. This result is illustrated in Fig.3 (black solid line). Note that our result recovers the case of a two-qubit Werner state, $\rho(1/2,\pi/4)$, which admits a LHS model \cite{werner89} (in both directions).

\subsection{One-way steering} Alice and Bob play different roles in the steering scenario. Hence steerability in one direction (say from Alice to Bob) does not necessarily imply steerability in the other direction (from Bob to Alice). This effect of one-way steering was first observed in the context of continuous variable systems and gaussian measurements \cite{Midgley10}. More recently, an example of a two-qubit one-way steerable state was presented considering arbitrary projective measurements \cite{bowles14}. That is, while Alice can steer Bob using a finite number of measurements, it would be impossible for Bob to steer Alice as the state admits a LHS model (Bob to Alice). Moreover, a qutrit-qubit state was shown to be one-way steerable considering POVMs \cite{quintino15}.

Clearly, our results are also useful for capturing one-way steering. Consider a given state $\rho$, the canonical form of which is found to satisfy condition \eqref{condition}. From Theorem 1, it follows that $\rho$ is unsteerable from Alice to Bob. Moreover, if steerability from Bob to Alice can be verified using standard methods, e.g. via violation of a steering inequality or using SDP methods \cite{pusey13,paulsteering}, one-way steerability of $\rho$ is proven. 

We present novel examples of one-way steering. Our states of interest will be the state $\rho(p,\chi)$ defined above. This state is unsteerable from Alice to Bob for projective measurements when \eqref{mafcon} is satisfied, corresponding to the area below the thick black line of Fig. \ref{mafcurve}. The steerability from Bob to Alice of the above state was discussed in previous works. In particular it was shown that $\rho(p,\chi)$ is unsteerable if $p\leq 1/2$ for all $\chi$ \cite{almeida07}. However, for $p>1/2$, the state becomes steerable from Bob to Alice for all $\chi$. This can be seen as follows. By applying on Alice's side the filter $F_\chi=\text{diag}(1/\cos\chi , 1/\sin\chi)$, we obtain the state 
\begin{align}\label{filterMaf}
\frac{1}{2}\;F_\chi\otimes\openone \, \rho(p,\chi) F_\chi\otimes\openone = \rho(p,\pi/4), 
\end{align}
which is simply a Werner state with visibility $p$. Since this state is steerable for $p>1/2$ \cite{wiseman07}, it follows from Lemma 1 that all states $\rho(p,\chi)$ with $p>1/2$ and satisfying \eqref{mafcon} are one-way steerable from Bob to Alice for projective measurements.

%%%%%%%%%%%%%%%%%%%%%%%%%%%%%%%%%%%%%%%%%%%%%%%%%%%%%%%%%%%%%%%%
											%    EXAMPLES   %
%%%%%%%%%%%%%%%%%%%%%%%%%%%%%%%%%%%%%%%%%%%%%%%%%%%%%%%%%%%%%%%%
%

\begin{figure}
\includegraphics[height=155pt]{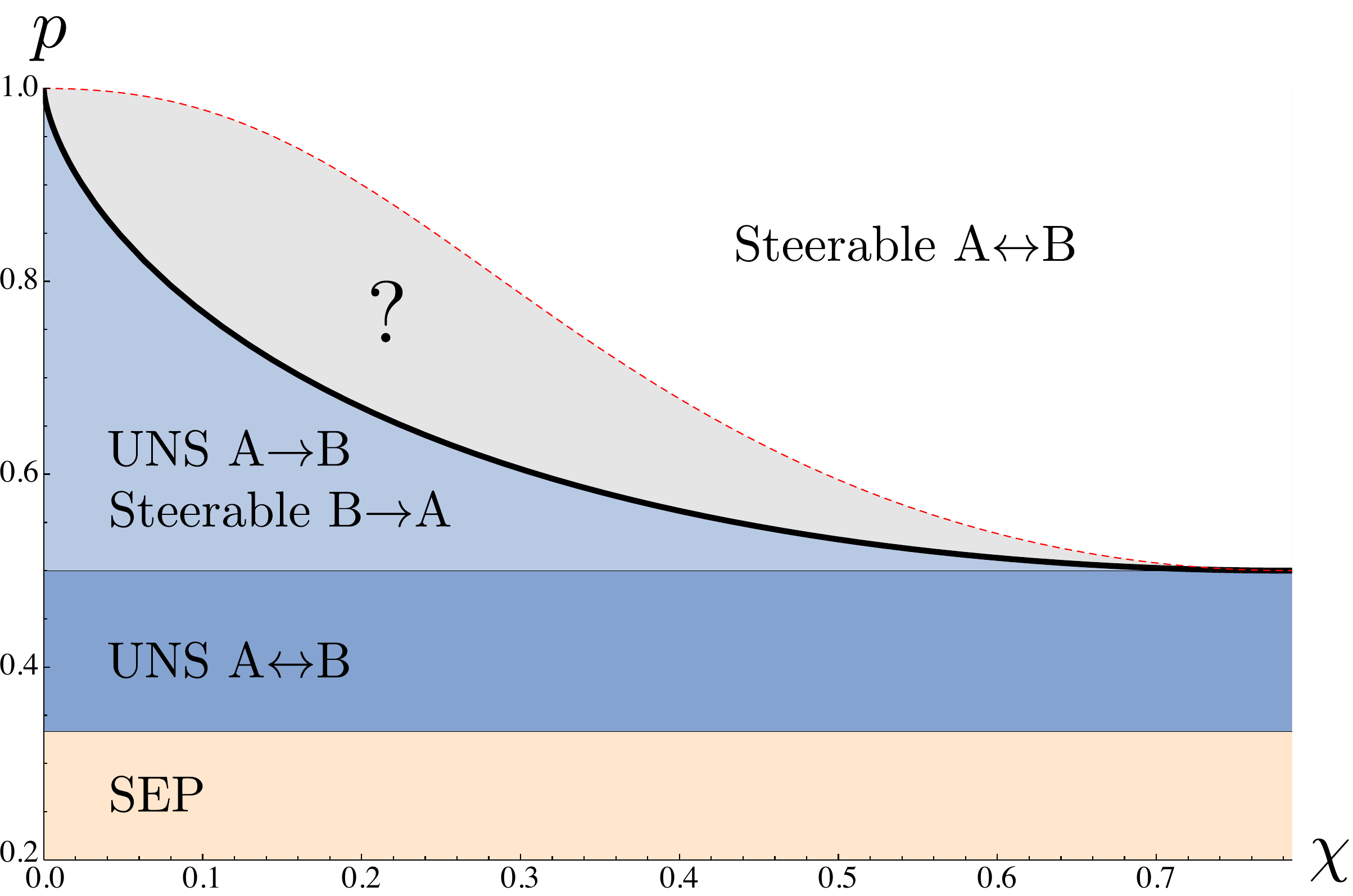}
\caption{\label{mafcurve} Characterization of entanglement and steering for states $\rho(p,\chi)$. The solid black curve corresponds to \eqref{mafcon}, obtained from our unsteerability criterion. The state is separable in the light orange region, unsteerable (in both directions) in the dark blue region, unsteerable only from Alice to Bob (hence one-way steerable) in the light blue region, and two-way steerable in the white region (obtained from equation 19 of \cite{sania14}). What happens in the grey region is an interesting open question. }
\end{figure}

\subsubsection{Simplest one-way steering}

A relevant question to ask is how many measurements are needed in order to demonstrate one-way steering. So far, the only known examples for a two-qubit state required as many as 13 measurements \cite{bowles14} and considered only projective measurements; similarly for the qutrit-qubit example of Ref. \cite{quintino15}.
%Here we show that the states $\rho(p,\chi)$, for $p> 1/\sqrt{2}$ and satisfying \eqref{mafcon}, give rise to the simplest possible example of one-way steering for projective measurements, where only two measurement settings are required to demonstrate steering.
Here we present the simplest possible example of one-way steering. That is, a two-qubit state such that Alice cannot steer Bob even with POVMs, although Bob can steer Alice using only two measurement settings.

We start with the case of projective measurements. We show that the states $\rho(p,\chi)$ with $p>1/\sqrt{2}$ and satisfying \eqref{mafcon} are one-way steerable, and only two measurements are required for demonstrating steering from Bob to Alice. To prove this we proceed as follows. First, from Lemma 1, it is sufficient to consider the state $\rho(p,\pi/4)$, i.e. a Werner state (see equation \eqref{filterMaf}). Since this state violates the CHSH Bell inequality for $p>1/\sqrt{2}$ \cite{brunner_review}, it is nonlocal, and thus steerable from Bob to Alice with two measurements.

Next, we move to the case of POVMs, building on the above example. Following protocol 2 of \cite{hirsch}, we construct the state
	\begin{equation}
	\rho_{\text{\tiny{POVM}}}(p,\chi)= \frac{1}{2} \rho(p,\chi) + \frac{1}{2} \ketbra{0}{0}\otimes \rho_B
	\end{equation} 
where $\rho_B=\Tr_A\rho(p,\chi)$, which is now unsteerable from Alice to Bob for POVMs, for $p$ and $\chi$ satisfying \eqref{mafcon}. We now show that steering from Bob to Alice is possible using only two measurements. From Lemma 1, we can focus our analysis on the state
	\begin{align}
	\rho_F&=\frac{F_\chi\otimes \openone\,\rho_{\text{\tiny{POVM}}} F_\chi\otimes \openone}{\Tr{(F_\chi\otimes \openone\,\rho_{\text{\tiny{POVM}}} F_\chi\otimes \openone)}}\\
	&= \frac{\cos^2\chi (p \ketbra{\phi^+}{\phi^+}+(1-p)\openone/4) +\frac{1}{2} \ketbra{0}{0}\otimes \rho_B}{\cos^2\chi+1/2}. \nonumber 
	\end{align}
Using the CHSH violation criterion \cite{chchcrit} one can find the range of parameters such that $\rho_{F}$ violates the CHSH inequality, and is thus steerable form Bob to Alice with two measurements. We find a parameter range $p>0.83353$ and corresponding $\chi$ given by condition \eqref{mafcon}.

%%%%%%%%%%%%%%%%%%%%%%%%%%%%%%%%%%%%%%%%%%%%%%%%%%%%%%%%%%%%%%%%
											%    JOINT MEASURABILITY   %
%%%%%%%%%%%%%%%%%%%%%%%%%%%%%%%%%%%%%%%%%%%%%%%%%%%%%%%%%%%%%%%%
%

\subsection{Sufficient condition for joint measurability} Theorem 1 also finds application in quantum measurement theory. This follows from the direct connection existing between steering and the notion of joint measurability of a set of quantum measurements \cite{quintino14,Uola14}, which has already found applications, see e.g. \cite{teiko}. This allows us to convert our sufficient condition for unsteerability into a sufficient condition for joint measurability of a set of qubit dichotomic POVMs. Notably, this condition is applicable to continuous sets of POVMs. 

A set of measurements $\{M_{a|x}\}$ is said to be jointly measurable \cite{buschbook} if there exists a joint POVM $\{G_{\lambda}\}$ with outcomes $\lambda$ and probability distributions $p(a|x,\lambda)$, from which the statistics of any of the measurements $\{M_{a|x}\}$ can be recovered by a suitable post processing, that is
\begin{align}
M_{a|x}=\int G_{\lambda}p(a|x,\lambda)\text{d}\lambda \quad\quad \forall\;a,x .
\end{align}
Let $\{M_{\pm|x}\}$ be a set of dichotomic qubit POVMs
\begin{align}
M_{+|x}=\frac{1}{2}\left(k_x\openone + \vec{m}_{x}\cdot\vec{\sigma}\right)
\end{align}
with $ ||\vec{m}_{x}|| \leq k_x \leq 2- ||\vec{m}_{x}||$, and $M_{-|x}= \openone - M_{+|x}$. Then the set $\{M_{\pm|x}\}$ is jointly measurable if 
\begin{align}\label{jmcon}
k_x(k_x-2)+2||\vec{m}_{x}||\leq0
\end{align}
for all $x$. This can be seen as follows. A set of measurements $\{M_{\pm|x}\}$ is jointly measurable if and only the assemblage given by $\sigma_{\pm|x}=\rho^{\frac{1}{2}}M_{\pm|x}\rho^{\frac{1}{2}}$, where $\rho$ is a full-rank quantum state \cite{UolaJM}, is unsteerable.
Choosing $\rho=\openone/2$ we get the corresponding assemblage $\sigma_{\pm|x}=\rho^{1/2}\;M_{\pm|x}\;\rho^{1/2}  = \frac{1}{2}M_{\pm|x}$. Following Theorem 1, condition \eqref{jmcon} ensures the unsteerability of $\sigma_{\pm|x}$, and consequently the joint measurability of $\{M_{\pm|x}\}$.

\section{conclusion} We have presented a simple criterion sufficient for a qubit assemblage to admit a LHS model. Notably, our method can guarantee the unsteerability of a general two-qubit state and should thus find applications. We have shown that the criterion allows one to detect entangled states which are only one-way steerable and provides the simplest such examples. Moreover, the criterion is relevant to quantum measurement theory, as it provides a sufficient condition for a continuous set of dichotomic qubit POVMs to be jointly measurable. Further to this, our criterion has also found applications in the connection between measurement incompatibility and nonlocality \cite{incompat} and multipartite nonlocality \cite{gme}. 

It would be interesting to extend this criterion in several directions. First, can the criterion be strengthened, e.g. by considering convex combinations, in order to become necessary and sufficient? Also, while we focused here on projective measurements, generalizing the method to POVMs would be useful \cite{footnote}. Whether the present ideas can be adapted to the case of higher dimensional systems (beyond qubits) is also a natural question. In particular, a natural case to consider is that of entangled states of dimension $d \times 2$, where our method should be directly applicable. Applications to multipartite steering \cite{he,cavalcanti} would also be interesting. 

\emph{Acknowledgements.}---We thank Charles Xu for discussions, and acknowledge financial support from the Swiss National Science Foundation (grant PP00P2\_138917 and Starting Grant DIAQ).

\begin{appendix}

\section{Convex combinations of unsteerable states}\label{sdp}

Since our criterion \eqref{condition} is not linear, and since it does not detect all separable states, it can be useful to consider convex combination of states. Specifically, consider an entangled unsteerable state of the form
\begin{align}
\rho=p\sigma+(1-p)\rho_{\text{\tiny{SEP}}},
\end{align} 
where $\rho_{\text{\tiny{SEP}}}$ is a separable (hence unsteerable) state, and $\sigma$ is an unspecified state. If $\sigma $ is unsteerable, then it follows that $\rho$ is unsteerable. However, it could be that, while $\rho$ violates condition \eqref{condition}, $\sigma$ does not. In this case, the unsteerability of $\rho$ can be shown by finding suitable $p$ and $\rho_{\text{\tiny{SEP}}}$ such that
\begin{align}\label{sepdecomp}
\sigma = \frac{\rho-(1-p)\rho_{\text{\tiny{SEP}}}}{p}
\end{align} 
satisfies condition \eqref{condition}. 

As a simple example, consider the state 
\begin{align}
\rho=\frac{1}{2}\sigma+\frac{1}{2}\left(\frac{1}{2}\ket{00}\bra{00}+\frac{1}{2}\ket{11}\bra{11}\right),
\end{align}
where $\sigma$ is the two-qubit isotropic state $\sigma=(\ket{\phi^+}\bra{\phi^+}+\openone/4)/2$. Hence $\rho$ is an equal mixture of $\sigma$ and the separable classically correlated state. One finds that for $\rho$, $T_{z}=3/4$ and so $\rho$ violates \eqref{condition} for $\hat{x}=(0,0,1)$. However, the state $\sigma$ has $T=\openone/2$ and $\vec{a}=\vec{0}$ and therefore satisfies \eqref{condition}, hence proving the unsteerability of $\rho$.

\section{Proof of unsteerability of $\rho(p,\chi)$}\label{e1proof}
Here we show that for the class of states \eqref{mafalda}, Theorem 1 implies that the $\rho(p,\chi)$ is unsteerable if 
\begin{align}\label{god}
\cos^2 2\chi \geq \frac{2p-1}{(2-p)p^3}.
\end{align}
To do this, we first consider states in canonical form \eqref{canonical}, which satisfy $\vec{a}=(0,0,a_z)$ and $|T_x|=|T_y|$. In order to perform the maximisation of Theorem 1, we parameterize $\hat{x}$ using spherical co-ordinates $\hat{x}= (\sin\theta\cos\phi,\sin\theta\sin\phi,\cos\theta)$. Our criterion \eqref{condition} may now be written as
\begin{align}
&\max_{\theta,\phi} F (\theta,\phi) \leq 1\,,\\ 
&F(\theta,\phi)=(\vec{a}\cdot\hat{x})^2+2||T\hat{x}|| \nonumber\\
&\quad\quad\quad=\cos^2\theta\;a_z^2+2\sqrt{T_x^2+\cos^2\theta\;(T_z^2-T_x^2)}. \nonumber
\end{align}
Unsurprisingly, $F$ depends only on $\theta$ since the problem is symmetric with respect to the $x$ and $y$ directions and we may ignore the maximisation over $\phi$. Note that if $|T_{z}|=|T_{x}|$ then the maximisation occurs at $\theta=0$ and our condition for unsteerability becomes 
\begin{align}\label{trivial}
a_z^2+2|T_z| \leq 1.
\end{align}
In the case $|T_{z}|\neq|T_{x}|$, one should find the extremal points of $F(\theta)$ and prove that they do not exceed 1. To find these extrema we solve
\begin{align}
\frac{\text{d}F}{\text{d}\theta}=-\sin2\theta\left( a_z^2+\frac{T_z^2-T_x^2}{\sqrt{T_x^2+\cos^2\theta(T_z^2-T_x^2)}}\right)=0.
\end{align}
From $\sin2\theta=0$ we have solutions $\theta=0,\pi/2,\pi$, and possibly other solutions given by
\begin{align}\label{nosol}
 a_z^2+\frac{T_z^2-T_x^2}{\sqrt{T_x^2+\cos^2\theta(T_z^2-T_x^2)}}=0.
\end{align}
We now derive conditions such that \eqref{nosol} has no solution. After rearranging \eqref{nosol} we have
\begin{align}
\cos^2\theta=\frac{T_x^2}{T_x^2-T_z^2}-\frac{T_x^2-T_z^2}{a^4_z}.
\end{align}
This has no solution if the RHS is greater than $1$ or less than $0$. Hence we have two conditions
\begin{align}\label{solcon}
\frac{T_x^2}{T_x^2-T_z^2}<\frac{T_x^2-T_z^2}{a^4_z}\quad\text{or}\quad
\frac{T_z^2}{T_x^2-T_z^2}>\frac{T_x^2-T_z^2}{a^4_z}.
\end{align}
If one of the above conditions is fulfilled we therefore have extrema for $\theta=0,\pi/2,\pi$ only. In this case, and since $F(0)=F(\pi)$, our condition for unsteerability becomes
\begin{align}\label{modcon}
\max_{\theta} F(\theta)=\max \{\;a_z^2+2|T_z| \;,\; 2|T_x|\;\}\leq 1.
\end{align}
We now move to the explicit case of $\rho(p,\chi)$. We find a canonical state with $|T_x|=|T_y|$, $\vec{a}=(0,0,a_z)$ and
\begin{align}
&a_z=\frac{(1-p^2)\cos2\chi}{1-p^2\cos^2 2\chi}\, ; \quad T_z=\frac{p(1-\cos^2 2\chi)}{1-p^2\cos^2 2\chi} \,; \nonumber \\[6pt]
&T_x=\sqrt{\frac{p^2(1-\cos^2 2\chi)}{1-p^2\cos^2 2\chi}}.
\end{align}
We now introduce the ansatz (for $p\geq\frac{1}{2}$)
\begin{align}\label{ansatz}
\cos^2 2\chi = \frac{2p-1}{(2-p)p^3}.
\end{align}
Eliminating the variable $\chi$ we find
\begin{align}
a^2_z&=\frac{(2-p)(2p-1)}{p}\,;\quad T_z=\frac{(1-p)^2}{p} \,;\nonumber\\
T_x&=1-p.
\end{align}
For the case $p=\frac{1}{2}$ we have $|T_{z}|=|T_{x}|$ and one finds that \eqref{trivial} is satisfied. For $p>\frac{1}{2}$ we show that the second condition of \eqref{solcon} holds. To this end, we calculate
\begin{align}
\frac{T_z^2}{T_x^2-T_z^2}-\frac{T_x^2-T_z^2}{a^4_z}=\frac{(3-p)(1-p)^3}{(p-2)^2(2p-1)}.
\end{align}
This is easily seen to be positive for $p\in]\frac{1}{2},1]$, and so $F(\theta)$ has extrema at $\theta=0,\pi,\pi/2$ only. It therefore remains to prove \eqref{modcon}. We find
\begin{align}
a_z^2+2|T_z|=1  \, ,\quad  2|T_x|=2(1-p).
\end{align}
and so \eqref{modcon} is satisfied for $p>\frac{1}{2}$. This proves that the state $\rho(p,\chi)$ is unsteerable if $p\geq\frac{1}{2}$ and $p$ and $\chi$ satisfy \eqref{ansatz}, which corresponds to the black curve of Fig. \ref{mafcurve} in the main text. Finally, we note that for a fixed $\chi$, lowering $p$ amounts to putting more weight on the separable part of the state. Since a convex combination of an unsteerable state with a separable state is also unsteerable, all points below the curve of  Fig. \ref{mafcurve} are also unsteerable. Hence, we arrive at \eqref{mafcon}. 

\end{appendix}

\end{document}